\pgfplotsset{compat=newest}
\newcommand{\scheme}{MSPT\xspace}
	\title{Multi-Shard Private Transactions for Permissioned Blockchains}
	\author{
	      Elli Androulaki\inst{1} \and
              Angelo De Caro\inst{1} \and
              Kaoutar Elkhiyaoui\inst{1}\and
              Christian Gorenflo\inst{2}\and
              Alessandro Sorniotti\inst{1}\and
              Marko Vukolic\inst{1}
              }
	\institute{
	IBM Research - Zurich\\ \and
	University of Waterloo
	}
\begin{document}
\maketitle

	\renewcommand{\arraystretch}{1.5}

	\begin{abstract}

	Traditionally, blockchain systems involve sharing transaction information across all blockchain network participants.  Clearly, this introduces barriers to the adoption of the technology by the enterprise world, where preserving the privacy of the business data is a necessity. 
	Previous efforts to bring privacy and blockchains together either still leak partial information, are restricted in their functionality~\cite{Quorum2019} or use costly mechanisms like zk-SNARKs~\cite{BCGGMT14}. In this paper, we propose the Multi-Shard Private Transaction (MSPT) protocol, a novel privacy-preserving protocol for permissioned blockchains, which relies only on simple cryptographic primitives and targeted dissemination of information to achieve atomicity and high performances. 
	\end{abstract}
		
\section{Introduction} 
	Early blockchain platforms have enabled decentralized payments that forego classical financial institutions altogether. Trust is established through full decentralization and consensus mechanisms (e.g., Proof-of-Work and Proof-of-Stake)~\cite{Nakamoto2008,pos}. However, pain points of many of the early platforms have been privacy and scalability. While a number of privacy solutions based on zero-knowledge proofs have been proposed to mitigate privacy threats~\cite{MGGR13,BCGGMT14,BAZB19,FMMO18}, these do not perform or scale well. Meanwhile, in particular in the blockchain for business context \cite{HyperledgerURL}, a shift towards Byzantine fault-tolerant (BFT)-based permissioned blockchains is under way to accommodate throughput-hungry enterprise applications. 

	This paper aims at bringing together privacy and scalability to \emph{permissioned blockchains}, without taxing transactional throughput; enabling thus a wider adoptions by the enterprise world. Our starting point is that enterprises require the decentralization of the blockchain alongside properties such as discoverability, single source of truth and distributed trust. However they do also require certain transactions to be siloed from the rest of the network owing to security and privacy concerns. In short, it is required that the blockchain behave both as a decentralized platform \textit{and} as a sparse collection of information silos when it comes to private n-lateral transactions within a larger network. While a combination of zero-knowledge proofs and encryption would allow a privacy-preserving migration to the blockchain, it would incur a considerable computational overhead to produce and verify transactions -- not to mention loss of functionality. Instead, we let private data reside in their natural silos and use the blockchain to facilitate transactions that impact multiple silos. Multiple information silos spontaneously arise in various industries, from the content, context and participants of any private legal agreement, through the details of any insurance contract to financial trusts. A silo in this context perfectly matches a blockchain \emph{shard}. Sharding is a common technique~\cite{Hearn2019, Damgard2020} to partition the blockchain, in order to increase transactional throughput and alleviate storage requirements~\cite{shardinc}. Yet, as of now, there are no privacy-preserving cross-shard transaction processing mechanisms that scale: validating such transactions require the blockchain to either access the content of the transactions~\cite{Amiri2019b} or the use of advanced cryptographic primitives (e.g., zero-knowledge proofs). 
	
	In this paper, we introduce \textit{Multi-shard private transactions} \scheme to bridge this gap. \scheme is a \emph{privacy-preserving atomic commit protocol} for permissioned blockchains, which allows shards to update their states atomically while keeping the computational overhead and information disclosure at the blockchain minimal. 
	\smallskip

	\noindent{\bf Contributions.} The contributions of our paper are three-fold:
	\begin{itemize}
		\item We introduce \scheme: a privacy-preserving protocol that allows shards to atomically update their states following a blockchain transaction. Notably, the protocol ensures that:  (i) at the end of its execution, shards either commit or discard the transaction unanimously;  (ii) the transaction only reveals the number of involved shards to the blockchain; (iii) finally,  during the protocol execution, a shard only learns the identities of the other shards on which its state update depends. 
		While we focus our description on a simplified MSPT version where a shard is operated by a single node, we also offer a generalized version of MSPT that distributes the shard operation across crash-prone nodes.
		
		\item A privacy-preserving atomic commit (PPAC) protocol, which sits at the heart of \scheme. PPAC can be seen as generalization of the celebrated two phase commit (2PC) protocol and we believe may be of independent interest.
		\item A proof-of-concept implementation and an experimental analysis of  simplified \scheme based on Hyperledger Fabric v1.4.
	\end{itemize}

	\smallskip

	\noindent\textbf{Related Work.} 
	Hyperledger Fabric~\cite{Fabric2019} leverages \emph{channels} to completely separate private from public data. Unfortunately, since Fabric does not implement cross-channel communication, the usefulness of such an architecture is limited. Androulaki~et~al.~\cite{androulaki2018channels} explored cross-channel transactions to move an asset from one channel to another. However, the proposed solution requires participants to go through three non-atomic steps: locking the asset on its channel of origin, exchanging the cross-channel transaction to prove the validity of the asset and finally unlocking the asset on its new channel. 
	
	Hyperledger Fabric comes also with \emph{private data collections} that help participants on the same channel restrict access to their private data. In a nutshell, a transaction will not carry the new value of the state being updated, rather only the hash is transmitted. This approach, though, reveals which parties are involved in a transaction and leaks access patterns 
	Our \scheme protocol, on the other hand,  completely prevents network participants from learning any information they are not privy to.

	CAPER~\cite{Amiri2019b} leverages sharding to increase data privacy and to scale performance. Each shard maintains an internal private chain of transactions that is intertwined with a global public chain. A shard orders and executes its  internal transactions locally, however, cross-shard transactions take place on the public chain and everyone can access their content.

	Quorum, an enterprise blockchain based on Ethereum, substitutes transaction payload with hashes and only grants authorized parties access to the payload~\cite{Quorum2019}. However, it is not possible to create transactions that span multiple trust boundaries. Malicious nodes can stall the system or create inconsistent states; for instance, by submitting a transaction hash to the ledger and timing the dissemination of the payload, so that only some nodes receive it.

	Atomic commit protocols (ACP) ensure that a distributed system either collectively commits or aborts an operation~\cite{Chrysanthis1998}. %
	Existing work focuses on either efficiency assuming only honest participants, or fault-tolerance in the presence of failures. 
	In particular, \emph{non-blocking} ACPs allow correct nodes to make progress without waiting for any failed nodes~\cite{keidar1995increasing, guerraoui2002non, Guerraoui2002}. Our scheme however, aims at efficiency, and hence, uses \emph{blocking} ACPs to achieve atomicity. The novelty of our scheme though lies in preserving privacy during the protocol execution. To the best of our knowledge, we are the first to investigate the angle of privacy in the context of atomic commit protocols.

	\smallskip
	
	\noindent\textbf{Roadmap.} The rest of this paper is organized as follows. In Section~\ref{sec:prelim} we introduce system model and terminology. In Section~\ref{sec:black-box} we introduce a simplified version of MSPT where shards are centralized. Section~\ref{sec:PPC} focuses on the privacy-preserving atomic commit (PPAC) subprotocol of MSPT. We evaluate MSPT in Section~\ref{sec:tngpoc} and discuss the extension of MSPT beyond Fabric and to distributed shards in Section~\ref{sec:generalize}.

\section{Preliminaries}
	\label{sec:prelim}
	Before delving into technical details,  we provide an overview of the model, including system participants, trust and network assumptions (Sec.~\ref{sec:model}), as well as an overview of \scheme (Sec.~\ref{subsec:protocol-overview}) and its goals (Sec~\ref{sec:spec}).

    \subsection{System Model}
	\label{sec:model}
	
	 \noindent\textbf{State.}
    State is very broadly defined as any data which is kept by the blockchain system. Its integrity and correctness must be guaranteed by the system since it is assumed that changes over it have an impact in the real world (for instance state data may correspond to ownership over financial assets, deeds to physical objects, entitlements etc.). As a result, the system enforces that data be modifiable only by a selected set of system actors, which we refer to as \textit{stakeholders}.
		
	 \noindent\textbf{Stakeholders.}
	We loosely define a state stakeholder (henceforward referred to as \emph{stakeholder}) as any party that stands to gain or lose something from the modification of a specific state in the system, i.e. they have a stake in it. Access rules follow naturally: a state can be read only by its stakeholders and can be modified only if a predefined policy, called hereafter \emph{stakeholder policy}, is adhered to, e.g. \emph{all of its stakeholders agree} or \emph{n out of its m stakeholders agree}. State is modified when stakeholders assemble and submit \textit{transactions}. %
	Stakeholders satisfying a stakeholder policy are honest w.r.t. approving updates to, and preserving the privacy of, that state, but stakeholders may otherwise form arbitrary coalitions to modify or access state they are not authorized to modify or access.

    \noindent\textbf{Shards.} A shard is the logical entity inside the blockchain system that 
    guarantees integrity and correctness of 
    the states the stakeholders have give the shard control of.
    A shard consists of a non-empty set of \emph{shard nodes}. Shard nodes share the same view of a subset of the system state by maintaining a private state database including state data of many stakeholders. As previously described, the access and ability to modify those states is governed by policies on a state by state basis. Shard nodes belong to the same trust boundary (for example, the same organization) w.r.t. the data that the shard handles. A shard as a whole is assumed to be \emph{honest-but-curious}, yet a fraction $f$ of the nodes in the shard can be crash faulty. Although we assume non-malicious behavior, we will outline the protocol's defenses and limitations in a Byzantine setting. In addition to the private state database, shards also replicate the content of the \emph{ledger}.
	
	\noindent\textbf{Ledger.}
	The ledger is a decentralized data store that maintains an ordered sequence of all \textit{transactions} of the permissioned blockchain. It is implemented over \emph{ledger nodes} (potentially overlapping with shard nodes), that keep \emph{ordered} records of all \emph{valid} transactions ever submitted. A transaction is deemed valid if it has been submitted by an enrolled member of the blockchain. 
	The ledger is assumed to be available at all times and shards rely on the ledger to order transactions. %
	Ledger nodes may be crash or Byzantine faulty, where the abstraction of a highly available ledger is implemented using a state-of-the-art crash fault-tolerant (CFT) or Byzantine fault-tolerant (BFT) state machine replication protocol (e.g., \cite{Stathakopoulou2019}). %
    
	\noindent\textbf{Clients.} These are enrolled members of the blockchain system that can submit transactions on behalf of stakeholders. They can be malicious in the same way the stakeholders can be.
    		
	\noindent\textbf{Network.}  We assume that the communication between all participants is point-to-point authenticated (e.g., using TLS~\cite{rescorla2018}) and that the communication in the network is partially synchronous \cite{Dwork1988}. More specifically, there is an upper bound on latency $\Delta<\infty$ for every message sent between two honest participants. Consequently, liveness of communication between any number of non-faulty participants is guaranteed. 
	Additionally, we consider that network-level adversaries are out of scope. 
    
	\noindent\textbf{Identity management.}
	All entities in the system (stakeholders, shard nodes, clients...) must obtain certified public keys with which they can authenticate to other entities and request authorization to perform certain actions. We refer to these are \textit{long-term public keys}. Long-term public  keys are used by their possessor for a prolonged time frame (e.g. years) and they are tied to the true identity of the possessor. At the same time, an actor may want to interact with the blockchain in a way that better preserves their privacy, e.g. by not revealing their true identity but only their entitlement to perform an action. A second set of public keys called \textit{ephemeral public keys} are used in these cases.

	\subsection{\scheme Goals} 
	\label{sec:spec}
	
	\noindent{\bf Correctness.} Correctness is the property that guarantees that state will only change when its stakeholders authorize it, and only in the way prescribed by them. We assume that an honest-but-curious shard will not update a state unless the corresponding stakeholder policy is met and the update satisfies the shard's internal validation rules, even in the presence of malicious shards, stakeholders and clients. Malicious shards will not obey these restrictions.
	
	Under these assumptions, \scheme ensures that if the protocol execution only involves honest-but-curious shards, then correctness is guaranteed. Notably, shards will either unanimously commit or unanimously discard the resulting state changes. No assumption is made on shards that are not involved in a transaction: they are free to form coalitions with other malicious actors in the system and try to subvert the guarantees of the protocol.
		
	\smallskip
	
	\noindent{\bf Liveness.} \scheme ensures the following liveness properties:
	\begin{itemize}
		\item If the protocol execution involves only honest-but-curious shards, then the execution terminates (all discard or all commit).
		
		\item If the protocol execution involves malicious shards, then the execution does not necessarily terminate. However, the shards can still engage in other protocol executions that do not update the state that is being locked.  
	\end{itemize}

	\smallskip
	\noindent{\bf Privacy} \scheme ensures that: 
	\begin{itemize}
		\item At the end of the interaction between the stakeholders and a (potentially-malicious) shard, the shard only learns: i) the stakeholders' long-term and ephemeral public keys; ii) the states that need to be updated and how; and iii) the public keys of the other shards involved. Stakeholders however do not disclose to the shard the content of the updates that other shards need to perform. 
				
		\item A transaction submitted to the ledger does not reveal (to network participants) any information about which stakeholders or shards are involved in the transaction. Also, the transaction does not leak any information about the expected updates. The transaction only reveals i) the number of shards in the transaction ii) and the ephemeral public key of the client who submitted it.
	\end{itemize}

	\subsection{Protocol Overview}\label{subsec:protocol-overview}
	In \scheme{} transaction lifecycle is divided into three phases, the \emph{pre-ordering phase}, the \emph{ordering phase}, and the \emph{post-ordering phase}. 
	
	\smallskip
	
	In the \emph{pre-ordering phase}, the stakeholders of the states to be updated interact with the shards to assemble the transaction. 
	First, stakeholders set up the terms of the transaction (i.e. expected state updates). For each state that needs to be modified, the respective stakeholders contact the shard that is responsible for the state with an \emph{update request}. If the request meets the corresponding stakeholder policy, then the shard acknowledges correct reception with a signed hash. Finally, the stakeholders exchange their update requests and the corresponding shard responses, verifying that the update requests match what has been agreed-upon and that the shard signatures in the received responses are valid. The stakeholders create the transaction by signing the hashes of the update requests and bundling the results, and relay the transaction to the client assigned to submit it to the ledger.
	\smallskip
	
	In the \emph{ordering phase}, the transaction is submitted to the ledger that orders it w.r.t. to the rest of the transactions submitted therein.

\smallskip
       The \emph{post-ordering phase}, refers to the transaction processing that takes place when the shards receive the ordered transaction from the ledger. At this point, each shard checks whether the transaction refers to a previously-acknowledged update request, and, if that is the case, the shard engages in an atomic commit protocol with all the other shards involved in the transaction. At the end of this phase, all shards either update their state or discard the transaction. Shards ignore any ledger transaction that they do not recognize. 
 	\section{Simplified MSPT protocol}
	\label{sec:black-box}
	
	We start our protocol description with a simplified version abstracting shards as centralized, single nodes (we generalize this to decentralized shards in Section~\ref{sec:generalize}).
	
	Consider  two stakeholders collaborating to create  a transaction. The stakeholders' states are held by two shards. For better understanding, we explain the protocol with the concrete example of a coin exchange, that can be easily generalized to other use-cases.

	\noindent\textbf{Coin exchange scenario.}
	Alice and Bob, the two stakeholders, wish to exchange \(100 \ \texttt{aCoins}\) for \(100 \  \texttt{bCoins}\). The \texttt{aCoins} accounts are managed by shard $S_A$, whereas the \texttt{bCoins} accounts are managed by shard $S_B$. We assume Alice and Bob to be in possession of long term signature key-pairs $\langle \sf{pk}_{A, B}, \sf{sk}_{A, B}\rangle$ they use to authenticate themselves to the system. To ensure fairness, the two transfers underlying the exchange must be executed atomically. 
	
	\subsection{Pre-Ordering Phase}
	\noindent\textbf{Stakeholder negotiation.}
	\label{sec:neg}
	The negotiation phase is the first part of the transaction pre-ordering phase. Alice and Bob prepare the update request for shard \(S_A\) to transfer  \(100\) \texttt{aCoins} to Bob. This negotiation happens ``off-chain'', i.e., exclusively  between the two stakeholders, without the involvement of any shard or ledger. For the purpose of the exchange transaction, Alice and Bob each generate, an ephemeral signing key-pair,  $\langle {\sf epk}_{x}, {\sf esk}_{x}\rangle, x = A, B$.  The resulting \emph{update request} $\sf req_{S_A}$ has the following form 
\[
{\sf req}_{S_A} = \langle 
{\sf id}, {\sf n}, {\sf p}, {\sf dep}, {\sf pk}, {\sf epk}, \Sigma %
\rangle.
\]

	\noindent where
	\begin{itemize}
	\item ${\sf id}$ is a unique request identifier, that is used to link all update requests that pertain to the same atomic transaction. 
	\item The random nonce ${\sf n}$ is used by the shard to randomize its response. The combination of the request identifier and the nonce allows shard \(S_A\) to deduplicate requests and prevent replay attacks.
	\item ${\sf p}$ is an opaque request payload. It is meant to contain a description of the way the stakeholders wish to modify the state. Its content is not prescribed by the protocol, leaving the freedom to each shard to define a suitable format.
	\item ${\sf dep}$ represents the dependency set of the proposed state changes, identifying the shards that \(S_A\) needs to collaborate with to make the transaction atomic. By setting ${\sf dep} = \{S_{B}\}$ as the dependency set to the message, Alice and Bob tell $S_{A}$ that it needs to coordinate with $S_{B}$ to commit the update. Namely, $S_{A}$ commits its changes if and only if $S_{B}$ also commits its changes. %
	\item The stakeholder public keys (${\sf pk} = [{\sf pk}_{A}, {\sf pk}_{B}]$) and signatures ([$\Sigma = [\sigma_{{\sf pk}_{A}}, \sigma_{{\sf pk}_{B}}$]) are used by the shard to validate the stakeholder policy of the state representing Alice and Bob's accounts. 
	\item The stakeholders' ephemeral public keys ($ {\sf ePK} = [{\sf epk}_{A}, {\sf epk}_{B}]$) will be used in the post-ordering phase to verify that the correct stakeholders signed the resulting transaction, without using their long-term public keys which would otherwise reveal their identity to the network.
	\end{itemize}
	
	\noindent Eventually, Alice (or Bob) passes ${\sf req}_{S_A}$ to \(S_A\).

	\noindent\textbf{Shard response.}
	\label{sec:resp}
	When $S_{A}$ receives the update request, it first checks if it has received the same request in the past and if so, it rejects it. Then $S_{A}$ unpacks the request and validates the long-term public keys and respective signatures against the stakeholder policy for the state that is involved in the transaction (in this case the state representing Alice and Bob's accounts). If this validation is unsuccessful, \(S_A\) simply responds with an error message. Otherwise, it hashes ${\sf req}_{S_A}$ using a secure hash function, into ${\sf hreq}_{S_A}$ and stores the key-value pair  $\langle  {\sf hreq}_{S_A}, {\sf  req}_{S_A} \rangle$ in its local transient store.  \(S_A\) concludes the response phase by sending a response $\sf{resp}_{S_A}$ to the origin of ${\sf req}_{S_A}$, containing \(S_A\)'s signature $\sigma_{S_A}^r$ on ${\sf hreq}_{S_A}$. 
	When the response is received and shared among the stakeholders, they can verify that the response (i) comes from the correct shard and (ii) belongs to the previous update request.
	
	\noindent\textbf{Multiple requests per transaction.}
	The steps outlined above are repeated for all update requests that make up the transaction. Consequently, Alice and Bob go through the same steps to update their accounts of \texttt{bCoins} on shard $S_B$. In particular, (i) they reuse the same request identifier ${\sf id}$ and (ii) add $\{S_{A}\}$ as the dependency set to the request to shard  \(S_B\). Note that even though \(S_A\) and \(S_B\) learn about each other's involvement, they do not learn anything about the content of each other's update request. 
	In practice, multiple update requests to different shards can be handled concurrently. %
	In fact, shards contact each other to coordinate only during validation.
	
	\noindent\textbf{Transaction creation.}
	\label{sec:txcreation}
	When the responses of shards \(S_A\) and \(S_B\) are received, Alice and Bob build the transaction as follows. They bundle the hashes from the shard responses into a single transaction and sign the result relative to the ephemeral public keys. This additional round of signatures is necessary to prevent a Byzantine stakeholder, for example Alice, from mixing and matching shard responses and unilaterally submitting the resulting transaction. Now, both Alice and Bob are sure that either the coin exchange is executed correctly or both retain their original coins.
	Alice (or Bob) then passes the transaction to a client to be submitted. The client accordingly signs the transaction and submits it to the ledger.

	\subsection{Ordering}
	\label{sec:floword}
	When the ledger receives a new transaction from the client, it verifies the client’s signature and puts the transaction into a new block, which is then disseminated to the whole network, i.e. received by all shards. 
	
	\subsection{Post-Ordering Phase}
	\noindent\textbf{Validation \& commitment.}
	\label{sec:val}	
	Both shards $S_{A}$ and $S_{B}$ scan every transaction in each new block they receive. They look up the hashes in their transient local store. Shards ignore hashes that they do not recognize, as that means that the shard did not generate the hash. At some point, $S_{A}$ and $S_B$ will recognize the hashes of their respective update requests. For each hash a shard recognizes, it verifies if the transaction carries valid signatures with respect to the stakeholders' ephemeral public keys. If that were not the case, then the transaction could have been created without the knowledge of one of the stakeholders. Note that shards only verify the set of signatures corresponding to their known update requests. They can ignore other signatures potentially belonging to parts of the transaction directed at other shards.
	
	After signature verification, shards update their state according to the instructions described in the payload ${\sf p}$ of the request. MSPT allows a transaction to contain multiple update requests to the same shard. Hence, shards simulate all their update requests in the order they show up in the transaction to guarantee consistent execution. The combined execution results form their local belief about the validity of the transaction. With that, they engage with the shards in the request's dependency set in a privacy preserving atomic commit protocol, whereby the unique request ID serves as the session identifier. For readability purposes, we dedicate Section \ref{sec:PPC} to describe our atomic commit protocol.  
	At the end of this protocol, the shards either commit the result of their respective update requests to their local state databases or discard them.
	
	\subsection{Privacy Discussion}

	\scheme consists of an initial phase of private siloed communication channels transitioning into a phase of public broadcast. We now point out potential information leaks due to this transition and how we prevent them.
	\begin{itemize}
        \item The identity of the submitting client is leaked to all ledger nodes and all shard nodes. However, if the ledger supports anonymous authentication (e.g. Idemix in the case of Hyperledger Fabric), stakeholders can submit transactions directly while hiding their identities. No additional trust assumptions are needed in this case. If the ledger does not support anonymous authentication, then the stakeholders can resort to a {\em pool of trusted clients} that act as a sort of a mixing network. A stakeholder will choose at random one client in the pool and delegate that client to submit its transaction.
        \item The identity of the shards involved in a transaction is only known to stakeholders who initiated it; shards that are involved in a transaction have a partial view about other involved shards, namely, they only know their immediate dependency set. Other entities learn nothing about that; this includes all ledger nodes, all other stakeholders and shards.
        \item The identity of the stakeholders that are involved in a transaction is only known to other involved stakeholders as part of the pre-order negotiation. A subset of involved stakeholders is revealed to the shards in the request messages. Despite the fact that stakeholders have to sign the request that is publicly broadcast and appended to the ledger, the reader will notice that the signature is generated with an ephemeral key pair, whose linkage to the long-term identity of the stakeholder is only revealed privately to each shard. As a consequence, other entities learn nothing about this.
        \item The request payload is only known by stakeholders and shards involved in the request that contains it. The rest of the network learns nothing about it, owing to the fact that only a hash is disseminated and the hash contains a random nonce,  which will guarantee sufficient entropy to avoid brute-force attacks over the hash value.
        \item Transactions reveal the number of involved shards and stakeholders. An approach to mitigate this issue is to add bogus update requests and signatures. 
	\end{itemize}
	
	\section{Privacy-preserving Atomic Commit (PPAC) Protocol}
	\label{sec:PPC}

	By assuming that shards (as a whole) are honest and the network is partially synchronous, we can use a blocking commit protocol instead of employing a full BFT consensus. However, to preserve the privacy of shards participating in the protocol, we cannot rely on all-to-all communication or single coordinators like the widely-used two-phase atomic commit (2PC)~\cite{gray1978notes, Bernstein1987}.
	
	Recall that the dependency sets included in the update requests identify which shards must atomically commit. Therefore, the connection between a shard and the shards in its dependency set form a natural channel of communication. Consequently, for our privacy-preserving atomic commit protocol (PPAC) shards exchange information along the edges of the dependency graph spanned by the union of all dependency sets in the transaction. Therefore, each shard only communicates with its direct neighbors in the graph (i.e. shards in its dependency set). 
	Before proceeding, let us (i) define what a dependency set 
	is and (ii) how a dependency set impact the dependency graph.
	A dependency set ${\sf dep}$ is a set consisting of shard identifiers. 
	A shard identifier $S_i$ can appear in the set only once and in the
	following form: $S_i$, $S_i+$, or $S_i-$. 
	Then, a dependency graph $G_{{\sf dep}}$ has a vertex for each shard identifier
	in the union of the dependency sets (modulo the sign).
	In addition, if a request for shard $S_j$ contains in its dependency set: 
	\begin{itemize}
		\item $S_i$ or $S_i+$: Then the dependency graph contains a
		direct edge from $S_j$ to $S_i$.
		\item $S_i-$: Then the dependency graph contains a
		direct edge from $S_i$ to $S_j$.
	\end{itemize}	
	The $(-,+)$ signs help each shard to perform proper access control.
	Indeed, if a request for shard $S_j$ contains in its dependency set: 
	\begin{itemize}
		\item $S_i+$: Then, $S_j$ should contact $S_i$.
		\item $S_i-$: Then, $S_j$ expects to be contacted by $S_i$.
		\item $S_i$: $S_j$ should contact $S_i$ and should also 
		expect to be contacted by $S_i$.
	\end{itemize}	
	Therefore, the shape of the graphs depends on the transaction's dependency structure, examples are illustrated in Figure~\ref{fig:graphs}.
	
	Note that instrumenting shard identifiers with signs forces the graph to be directed. This controls the flow of information and, for example, satisfies the scenario where a shard needs to provide information to another shard, but must not receive any information back from the recipient.
	Using the dependency graph as the communication overlay in this way guarantees that every shard learns all the necessary information to decide whether to commit or discard a transaction, and no more than that information. First, note that transitive dependencies form paths in the graph. Accordingly, we consider a shard to be connected to another shard if and only if it is (transitively) dependent on that shard. To ensure that all the necessary data is propagated along the graph edges the protocol runs in multiple rounds. 
	In Appendix~\ref{sec:bounds}, we prove that the protocol decides after at most $n-1$ rounds, where $n$ is the number of hashes in the transaction, and after as little as one round. To give some intuition for this result, when $n$ shards are involved in the protocol, information takes at most $n-1$ rounds to reach from one shard to any other, assuming that it can only travel one edge per round. This bound can be calculated by all shards individually and is known to them at the start of the protocol. 
	
	In the following, we describe our privacy-preserving commit protocol with an example run illustrated in Figure~\ref{fig:PPC}, while pseudo-code is in Appendix~\ref{sec:pseudocode}.
	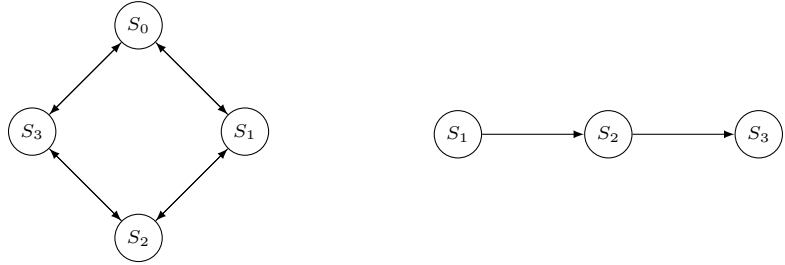
\begin{figure*}[t]
		\scriptsize
		\centering
		\begin{subfigure}[t]{0.48\textwidth}
			\centering
			\begin{tikzpicture}[node distance = 2cm]
				\node[draw, circle](s0) {$S_0$};
				\node[draw, circle, below right of=s0](s1) {$S_1$};		
				\node[draw, circle, below left of=s1](s2) {$S_2$};
				\node[draw, circle, below left of=s0](s3) {$S_3$};			
				\draw[->, -latex] (s0) -- (s1);
				\draw[->, -latex] (s1) -- (s0);
				\draw[->, -latex] (s1) -- (s2);
				\draw[->, -latex] (s2) -- (s1);
				\draw[->, -latex] (s2) -- (s3);
				\draw[->, -latex] (s3) -- (s2);
				\draw[->, -latex] (s3) -- (s0);
				\draw[->, -latex] (s0) -- (s3);
			\end{tikzpicture}
			\caption{Mutual dependency: 
			This graph is induced by the following dependency 
			sets: ${\sf dep}_0={S_1,S_3}$, ${\sf dep}_1={S_0,S_2}$, ${\sf dep}_3={S_1,S_3}$, 
			${\sf dep}_3={S_0,S_2}$}
			\label{fig:graph2}
		\end{subfigure}
		\hspace{0.2cm}
		\begin{subfigure}[t]{0.48\textwidth}
			\centering
			\begin{tikzpicture}[node distance = 2cm]
				\node[draw, circle](s2) {$S_2$};		
				\node[draw, circle, left of=s2](s1) {$S_1$};
				\node[draw, circle, right of=s2](s3) {$S_3$};
				\draw[->, -latex] (s1) -- (s2);
				\draw[->, -latex] (s2) -- (s3);
				\node[draw, circle, below right of=s2, opacity=0]{s4}; %
				\node[draw, circle, above right of=s2, opacity=0]{s4}; %
			\end{tikzpicture}
			
			\caption{Transitively directed dependency: 
			This graph is induced by the following dependency 
			sets: ${\sf dep}_1={S_2+}$, ${\sf dep}_2={S_1-,S_3+}$, ${\sf dep}_3={S_1-}$,}
			\label{fig:graphs1}
		\end{subfigure}
		\caption{Examples of dependency graphs}
		\label{fig:graphs}
	\end{figure*}
		 
	\noindent\textbf{Start.} By design, a transaction can only be formed after all involved shards have seen their corresponding update requests. Thus, at the start of the atomic commit protocol for a specific transaction, each involved shard is able to validate the update request and decide on a starting state value \texttt{\{tentative commit, discard\}}.

	\noindent\textbf{Protocol round.} Each involved shard queries the shards in its dependency set for their current states. More precisely, only shards marked with a $+$ sign or no sign are contacted. A shard responds to a query if and only if the invoking shard appears in the dependency set with a $-$ sign or no sign. If a shard receives a \texttt{discard} message, then it must update its own value to \texttt{discard}.
	
	\noindent\textbf{End.} Once a shard has completed $n - 1$ rounds it can finalize the transaction according to its final value, since it has received the cumulative state of all its (transitive) dependencies. Note that this means that it now processes exclusively its own update request. The only thing the shard learns during the protocol is which action its dependencies will choose. Specifically, it learns nothing about the update requests of the other shards.

	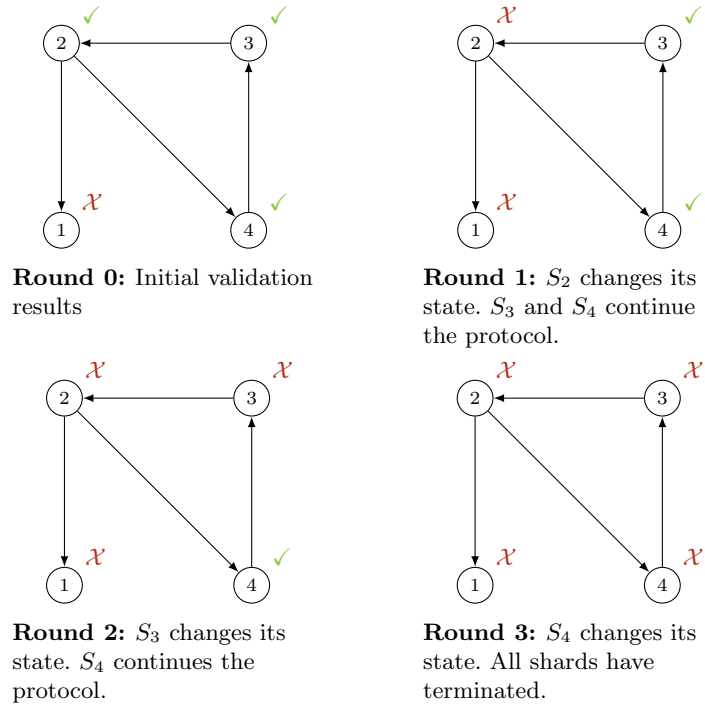
\begin{figure*}[t]
	\scriptsize
	\centering
	\captionsetup[subfigure]{labelfont=bf,textfont=normalfont,singlelinecheck=off,justification=raggedright,labelformat=simple, labelsep=colon}
	\renewcommand{\thesubfigure}{Round \arabic{subfigure}}

	\begin{subfigure}[t]{0.35\textwidth}
		\centering
\begin{tikzpicture}
[
    sh/.style={draw, node distance=2cm, circle},
    an/.style={node distance=0mm, font=\footnotesize}
]

	\node[sh] (b) {2};
	\node[sh, right=of b] (c) {3};
	\node[sh, below= of c] (d) {4};
	\node[sh, below=of b] (a) {1};
	
	\draw[-latex](b) -- (a);
	\draw[-latex](c) -- (b);
	\draw[-latex](d) -- (c);
	\draw[-latex](b) -- (d);
	
	\node[an,above right=of a, text=BrickRed]{$\mathcal{X}$};
	\node[an,above right=of b, text=LimeGreen]{\checkmark};
	\node[an,above right=of c, text=LimeGreen]{\checkmark};
	\node[an,above right=of d, text=LimeGreen]{\checkmark};

	\end{tikzpicture} 		\setcounter{subfigure}{-1}
		\caption{Initial validation results}
	\end{subfigure}
	\hspace{1cm}
	\begin{subfigure}[t]{0.35\textwidth}
		\centering
\begin{tikzpicture}
[
    sh/.style={draw, node distance=2cm, circle},
    an/.style={node distance=0mm, font=\footnotesize}
]

	\node[sh] (b) {2};
	\node[sh, right=of b] (c) {3};
	\node[sh, below= of c] (d) {4};
	\node[sh, below=of b] (a) {1};
	
	\draw[-latex](b) -- (a);
	\draw[-latex](c) -- (b);
	\draw[-latex](d) -- (c);
	\draw[-latex](b) -- (d);
	
	\node[an,above right=of a, text=BrickRed]{$\mathcal{X}$};
	\node[an,above right=of b, text=BrickRed]{$\mathcal{X}$};
	\node[an,above right=of c, text=LimeGreen]{\checkmark};
	\node[an,above right=of d, text=LimeGreen]{\checkmark};

	\end{tikzpicture} 		\caption{$S_2$ changes its state. $S_3$ and $S_4$ continue \\ the protocol.}			
	\end{subfigure}

	\begin{subfigure}[t]{0.35\textwidth}
		\centering
\begin{tikzpicture}
[
    sh/.style={draw, node distance=2cm, circle},
    an/.style={node distance=0mm, font=\footnotesize}
]

	\node[sh] (b) {2};
	\node[sh, right=of b] (c) {3};
	\node[sh, below= of c] (d) {4};
	\node[sh, below=of b] (a) {1};
	
	\draw[-latex](b) -- (a);
	\draw[-latex](c) -- (b);
	\draw[-latex](d) -- (c);
	\draw[-latex](b) -- (d);
	
	\node[an,above right=of a, text=BrickRed]{$\mathcal{X}$};
	\node[an,above right=of b, text=BrickRed]{$\mathcal{X}$};
	\node[an,above right=of c, text=BrickRed]{$\mathcal{X}$};
	\node[an,above right=of d, text=LimeGreen]{\checkmark};

	\end{tikzpicture} 		\caption{$S_3$ changes its state. $S_4$ continues the protocol.}
	\end{subfigure}
	\hspace{1cm}
	\begin{subfigure}[t]{0.35\textwidth}
		\centering
\begin{tikzpicture}
[
    sh/.style={draw, node distance=2cm, circle},
    an/.style={node distance=0mm, font=\footnotesize}
]

	\node[sh] (b) {2};
	\node[sh, right=of b] (c) {3};
	\node[sh, below= of c] (d) {4};
	\node[sh, below=of b] (a) {1};
	
	\draw[-latex](b) -- (a);
	\draw[-latex](c) -- (b);
	\draw[-latex](d) -- (c);
	\draw[-latex](b) -- (d);
	
	\node[an,above right=of a, text=BrickRed]{$\mathcal{X}$};
	\node[an,above right=of b, text=BrickRed]{$\mathcal{X}$};
	\node[an,above right=of c, text=BrickRed]{$\mathcal{X}$};
	\node[an,above right=of d, text=BrickRed]{$\mathcal{X}$};

	\end{tikzpicture} 		\caption{$S_4$ changes its state. All shards have terminated.}			
	\end{subfigure}

	\caption{Example of an execution of our \texttt{PPAC} algorithm. 
	The dependency graph is induced by the following dependency sets:
	${\sf dep}_1=\{S_2-\}$, ${\sf dep}_2=\{S_3-,S_4+\}$, ${\sf dep}_3=\{S_2+,S_4-\}$, and 
	${\sf dep}_4=\{S_2-,S_3+\}$.
	Each shard is only aware of its direct neighbors. Dependencies are marked by arrows.}
	\label{fig:PPC}
\end{figure*}

	\noindent\textbf{Synchrony.} While we have described the rounds as if all shards would go through them synchronously, in reality shards do not directly coordinate and start the protocol as soon as they find a transaction that involves them. To still guarantee deterministic termination of the protocol, shards can keep track of their round numbers and the round numbers of their neighbors by counting state queries. Any shard sends exactly one query to its dependencies per round. Based on these counters, we can introduce some coupling between neighbors: a shard will not respond to a query unless it itself has received all responses from its dependencies in the previous round. The querying shard has no choice but to wait until it receives a response. This way, neighbors in the dependency graph go through the protocol in lockstep and always see the correct state from their dependencies.

	\noindent\textbf{Optimization.} Shards might be able to finalize their state early. First, a \texttt{discard} can never be changed back to \texttt{commit}, so any discarding shard, while still responding to queries, can discard the transaction immediately. Second, shards can come to different upper bounds based on their dependency sets, as discussed in Appendix~\ref{sec:bounds}. To take advantage of this, we introduce a third value \texttt{finalized commit}, which shards can respond with when they have already finalized. A shard that in turn only receives \texttt{finalized commit} responses from all its dependencies can finalize its own state, because it can be sure that the states of dependent shards will never change again. This way the early commit or discard can be cascaded along the dependency graph.
	
	Performance of the protocol can be further enhanced by shards analyzing dependencies between the transactions they are involved in~\cite{Gorenflo2019a}. If the update requests would alter distinct local state the shard can start the atomic commit protocol for those transactions in parallel.
	
	\noindent\textbf{Transaction atomicity.} While we have stated as a goal to achieve atomic commit for transactions as a whole, in reality in MSPT atomicity is defined by the dependencies between requests. If two requests are not dependent on each other then there is no need for them to atomically commit. Moreover, a transaction can theoretically include requests that span multiple disconnected dependency graphs. Our protocol guarantees that requests \emph{connected by dependencies} atomically commit. Disconnected requests effectively form multiple transactions bundled in the same message, so this does not break the security goals stated in Section~\ref{sec:spec}.

	\section{Experiments}
	\label{sec:tngpoc}

	\noindent\textbf{Implementation.} We implemented a proof-of-concept of simplified MSPT on Hyperledger Fabric~v1.4. Our MSPT-Fabric retains all of Fabric's features, with the added capability of processing cross-shard transactions created by our protocol. According to Section~\ref{sec:neg}, the creation of update requests is handled completely off-chain. A request is then sent to a shard, which is implemented as a Fabric endorser with a thin wrapper. In this context, each peer/endorser of the Fabric network represents a distinct shard (the extension to shards comprising more than one node is described in Section~\ref{sec:generalize}). The wrapper unpacks the message and translates it into a Fabric transaction proposal to forward it to the endorser. The response is directed back through the wrapper, where it is stored together with the update request in a transient store. Only a hash is returned to the client according to Section~\ref{sec:resp}. The ledger is implemented over the ordering service of Fabric, which stays unchanged as it does not unpack the contents of a transaction. Lastly, we added a new flag \texttt{PRIVATE$\_$TRANSACTION} to the grpc messages Fabric uses for communication. This way, a peer can correctly deserialize a private transaction during validation, look up the update request in the transient store and engage in the \texttt{PPAC} from Section~\ref{sec:PPC}.
	
	We implement the same toy example as in Section~\ref{sec:black-box} with a total of up to five shards.
	
	\noindent\textbf{Experimental setup.} We use nine local servers: one client, three orderers forming a Raft cluster (MSPT-Fabric ledger) and five endorsing peers (shards). Each of these processes is spawned in a Docker container on its own server equipped with two Intel\textsuperscript{\textregistered} Xeon\textsuperscript{\textregistered}~CPU~E5-2620~v2 processors at 2.10~GHz, for a total of 24~hardware threads and 64~GB of RAM.		
	We compare three degrees of privacy features and their performance trade-off: Full MSPT-Fabric, MSPT-Fabric with \texttt{2PC} instead of \texttt{PPAC} and Fabric without privacy features. Transactions in this setup only modify non-overlapping state, so both atomic commit protocols can run parallelized. Up to five shards collaborate on a single transaction with transitive directed dependencies between them as shown in Figure~\ref{fig:graphs1}. Hence, \texttt{PPAC} will terminate in zero (single shard) to four rounds (five shards), while \texttt{2PC} terminates in zero (single shard) or one round (other configurations). Each experimental run sends 10,000 transactions to the network.

\begin{figure*}[t]
	\centering
	\scalebox{0.55}{
		\begin{subfigure}[t]{0.6\textwidth}
			
\begin{tikzpicture}[scale=1]
		\pgfplotsset{cycle list/Set1}
        \begin{axis}[        
        legend style={at={(1,0.95)}, anchor=south east, font=\small},
        legend cell align=left,
        ymin=0,
        ymax=390,
        width= \linewidth,
        ylabel= Throughput in tx/s,
        xlabel = {\# shards per tx},
        xtick=data,
        cycle list={[indices of colormap={1,0,2 of Set1}]}
        ]       
        
        \pgfplotstableread[col sep=comma]{tp.csv}\data
        
        \addplot+[thick, mark=square,mark options={solid},error bars/.cd, y dir = both,y explicit] table[x=x, y=fab_mean, y error=fab_std]from \data;
        \addplot+[thick, mark=triangle,mark options={solid},dashed, error bars/.cd, y dir = both,y explicit,error bar style={solid}] table[x=x, y=ptx_mean, y error=ptx_std]\data;
        \addplot+[thick, mark=x,mark options={solid}, dotted, error bars/.cd, y dir = both,y explicit,error bar style={solid}] table[x=x, y=ptx_tp_mean, y error=ptx_tp_std]from \data;
        \legend{ Fabric v1.4,MSPT-Fabric + \texttt{PPAC}, MSPT-Fabric + \texttt{2PC}};
        \end{axis}
\end{tikzpicture}

			\caption{sequential update requests}
			\label{fig:tpa}
		\end{subfigure}
		\hfill
		\begin{subfigure}[t]{0.6\textwidth}
			
\begin{tikzpicture}[scale=1]
		\pgfplotsset{cycle list/Set1}
        \begin{axis}[        
        legend style={at={(1,0.95)}, anchor=south east, font=\small},
        legend cell align=left,
        ymin=0,
        ymax=3900,
        width= \linewidth,
        ylabel= Throughput in tx/s,
        xlabel = {\# shards per tx},
        xtick=data,
        cycle list={[indices of colormap={1,0,2 of Set1}]}
        ]       
        
        \pgfplotstableread[col sep=comma]{tp.csv}\data
        
        \addplot+[thick, mark=square,mark options={solid},error bars/.cd, y dir = both,y explicit] table[x=x, y=fab_pe_mean, y error=fab_pe_std]from \data;
        \addplot+[thick,mark=triangle,mark options={solid},dashed, error bars/.cd, y dir = both,y explicit,error bar style={solid}] table[x=x, y=ptx_pe_mean, y error=ptx_pe_std]\data;
        \addplot+[thick,mark=x,mark options={solid}, dotted, error bars/.cd, y dir = both,y explicit,error bar style={solid}] table[x=x, y=ptx_pe_tp_mean, y error=ptx_pe_tp_std]from \data;
        \legend{ Fabric v1.4,MSPT-Fabric + \texttt{PPAC}, MSPT-Fabric + \texttt{2PC}};
        \end{axis}
\end{tikzpicture}
 			\vspace{-0.4cm}
			\caption{pre-processed update requests}
			\label{fig:tpb}
		\end{subfigure}
		
		\begin{subfigure}[t]{0.6\textwidth}
\begin{tikzpicture}[scale=1]
	\begin{axis}[        
	legend style={at={(1,0.95)}, anchor=south east, font=\small},
	legend cell align=left,
	ymin=0,
	ymax=29,
	width= \linewidth,
	ylabel= Latency in ms,
	xlabel = {\# shards per transaction},
	xtick=data,
	cycle list/Set1,
	cycle list={[indices of colormap={0,2,3 of Set1}]},
	]       
	
	\pgfplotstableread[col sep=comma]{lt.csv}\data
	
	\addplot+[thick,mark=triangle,mark options={solid},dashed, error bars/.cd, y dir = both,y explicit,error bar style={solid}] table[x=x, y=ptx_mean, y error=ptx_std]\data;

	\addplot+[thick,mark=square,mark options={solid},dashed, error bars/.cd, y dir = both,y explicit,error bar style={solid}] table[x=x, y=tpc_mean, y error=tpc_std]\data;
	\legend{ \texttt{PPAC},  \texttt{2PC}};
	\end{axis}
\end{tikzpicture}
 			\caption{atomic commit protocol only}
			\label{fig:lat}
		\end{subfigure}
	}
	
	\caption{Impact of private transaction creation and \texttt{PPAC} on Fabric's performance.}
	\label{fig:tp}
\end{figure*}
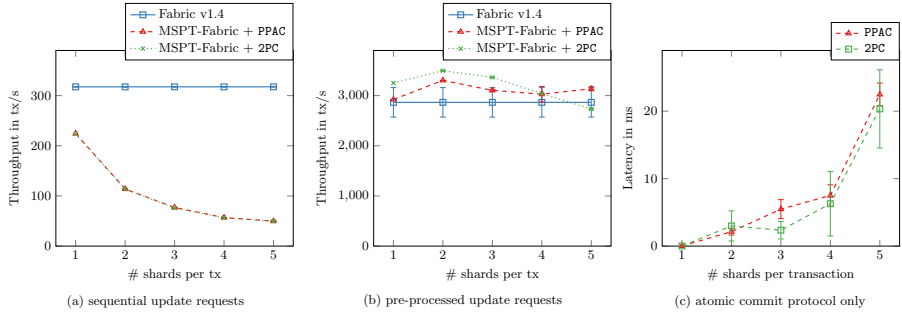	
	
	\noindent\textbf{Throughput.}		
	Experiments are conducted in two modes: First, we measure the full end-to end throughput starting with sequential update requests to the shards and ending with transaction commits. Second, stakeholders send all update requests to all shards and create all transactions before the measurements start and we only measure the throughput of the validation and commitment step.
	
	In the end-to-end experiment, we find that the update request processing results in a performance hit of about 30\% (Figure~\ref{fig:tpa}) compared to regular Fabric. For more than one shard, the throughput follows an expected $\frac{1}{n}$ relation because of the sequential processing of $n$ update requests before submitting the transaction, where $n$ MSPT update requests translate to $n$ regular Fabric transactions. The choice of atomic commit protocols is irrelevant for end-to-end throughput. Focusing purely on the validation and commitment throughput (Figure~\ref{fig:tpb}), all tested systems show very similar performance. Regular Fabric seems to even be outperformed by MSPT-Fabric. However, this is due to MSPT-Fabric skipping parts of the validation. Remodeling these parts would have meant a complete rewrite of existing Fabric code, which was beyond the scope of the proof-of-concept. More importantly, \texttt{2PC} outperforms \texttt{PPAC} only slightly, without offering any privacy features.
		
	\noindent\textbf{Latency.} Our final experiment (Figure~\ref{fig:lat}) investigates the average latency of a single execution of the atomic commit protocols. Surprisingly, \texttt{2PC} just barely manages to come out ahead and shows similar degradation of performance for an increasing number of shards. 
	
	\noindent\textbf{Implication.} Comparing the results between the two throughput experiments, it is clear that the processing of update requests is the main bottleneck of the protocol. Processing private update requests imposes a constant overhead of about 30\% on the transaction throughput. However, this can be amortized with an implementation of MSPT-Fabric that allows concurrent instead of only sequential update request processing.	
	With the help of parallel execution of the atomic commit protocols, the validation and commitment step can easily keep up with the pre-processing. 
	
	Since both  \texttt{PPAC} and \texttt{2PC} show surprisingly similar results in both throughput and latency experiments despite diverging numbers of communication rounds between shards, we conclude that at least in a setup with local servers the performance of the commit protocols is mostly governed by Fabric's block dissemination: statistically, the more participants in a transaction, the larger the average gap between the first and last shard receiving the same block. While \texttt{2PC} must wait for all shards to communicate their validation results to terminate, \texttt{PPAC} can already make progress if neighbors in the dependency graph are ready. This weaker synchrony requirement can help offset the fewer rounds of communication of \texttt{2PC}. 
	
	In conclusion, in our experiments, MSPT with \texttt{PPAC} is almost five times faster than the reported performance of private contracts on JP Morgan Chase's Quorum blockchain~\cite{Baliga2018a}, which has a less capable privacy model than MSPT.
	
	\section{Generalizing MSPT}
	\label{sec:generalize}	
	\label{sec:anybc}
	\label{sec:fullMSPT}
	
So far we described MSPT as an extension to Hyperledger Fabric assuming centralized shards. 
In this section, we first discuss how other blockchain systems could implement MSPT. Then we describe the necessary changes to run MSPT with decentralized shards. 
	
	\noindent\textbf{Arbitrary permissioned blockchains.}
	MSPT can be configured to work with both order-execute  (OX) or  execute-order (XO) blockchains \cite{Vukolic2017}. Recall that shards only send a hash back to the stakeholders. By deliberately decoupling request execution from the response to the stakeholders in that manner the shard is free to execute the state transition any time between receiving the request and starting the cross-shard commit protocol.
	
	For use in OX blockchains (e.g., Enterprise Ethereum), shards wait with the execution until they see the corresponding hash in the ledger. However, they need a rollback strategy in case they must discard after the protocol completes. This is a similar restriction to rolling back if the transaction provides not enough gas.
	
	In contrast, XO blockchains like Fabric can simulate the result as soon as they receive the request. After an invalid simulation, they can immediately report back an error so this request will never enter a transaction. On success, they store the result until they spot the corresponding nonce in a transaction. Then they only need to validate that the state transition has not become stale in the meantime before starting the cross-shard commit protocol.

	\noindent\textbf{Decentralization of shards.}
	For the generalization of shards we first look at shard interfaces. First we consider interaction of shards with stakeholders. Recall that we model fault-prone shard nodes as crash-faulty.
	
	When a stakeholder sends an update request to a shard in the centralized case, we simply replace this by a stakeholder reliably broadcasting \cite{CachinGR11} an update request to a decentralized shard. Reliable broadcast guarantees that eventually, if a single correct shard node delivers an update request, all correct shard nodes will. 
	
	When a shard is about to send a response to a stakeholder, we replace this step by having all shard nodes send responses to a stakeholder. This, together with the reliable broadcast step of an update request guarantees liveness --- even if a single shard node is correct (non-faulty), a stakeholder is guaranteed to receive a response.
	
	As for the validation \& commitment step, apart from PPAC invocation, this step is local to a shard and involves only deterministic steps. Note that due to reliable broadcast of update request, all shard nodes have locally the update request, once it is delivered on the ledger. Therefore, we generalize this step by having every shard node execute validation \& commitment step independently. 
	
	Finally, to generalize PPAC (Sec.~\ref{sec:PPC}), we let every shard node run PPAC independently, and we replace a step in which a shard $S_A$ queries shard $S_B$, by having every shard node in $S_A$ query every shard node in $S_B$ and awaiting until it receives the first response from some shard node in $S_B$. Since shard nodes are assumed to be crash-faulty (this assumption itself guarantees safety) and if at least one shard node is correct per shard, then this ensures liveness.

	Some of these generalization steps could be solved by putting a consensus (total order (TO) broadcast) protocol in the critical path of the shard processing. However, this can have a considerable impact on performance and is strictly speaking not necessary. However, use of efficient TO broadcast protocol, such as \cite{Stathakopoulou2019} remains an option.
	
\bibstyle{abbrv}
\printbibliography
\appendix
\section{Bounds on communication rounds}
\label{sec:bounds}

For the purposes of this discussion, we use the definition of a round of communication from Section~\ref{sec:PPC}. This presents a logical batching rather than a temporal one, since shards are only loosely synchronized through their counters. Yet, under the assumption of a partially synchronous system there is a bounded delay between two shards executing the same round of communication. In the following, we examine how many rounds of communication are necessary for shards to atomically commit a given transaction.

\begin{lemma}
	\label{lem:edge}
	Each round of communication can only relay information along the edges of the dependency graph. Any information can only travel the length of one edge per round.
\end{lemma}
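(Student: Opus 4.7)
The plan is to prove the two clauses of the lemma separately, relying directly on the definitions established in Section~\ref{sec:PPC}: the dependency graph is the \emph{only} communication overlay used by PPAC, and a round is a single exchange between a shard and those in its dependency set.

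First I would establish the ``edges only'' clause. By inspection of the protocol, a shard $S_j$ sends a message during PPAC only to a shard $S_i$ that appears in ${\sf dep}_j$ with either no sign or a $+$ sign, and $S_j$ only accepts a message from $S_i$ if the $(-,+)$ annotations agree. By the definition of the dependency graph $G_{{\sf dep}}$ given earlier in Section~\ref{sec:PPC}, each such ordered pair $(S_j,S_i)$ or $(S_i,S_j)$ corresponds precisely to a directed edge of $G_{{\sf dep}}$. Hence every message exchanged during any round of PPAC traverses exactly one directed edge of the dependency graph, and no message is ever sent between shards that are not directly connected in $G_{{\sf dep}}$.

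Next I would handle the ``one edge per round'' clause by a simple causality argument. Fix a round $r$ and consider any piece of information $x$ that is known to some shard $S_k$ at the end of round $r$ but was not known to $S_k$ at the start of round $r$. Since during round $r$ each shard only sends one query to, and receives one response from, each shard in its dependency set (enforced by the per-round counters and the lockstep rule described in the \textbf{Synchrony} paragraph), the only way $x$ could have arrived at $S_k$ during round $r$ is via a single message from some neighbor $S_\ell$ of $S_k$ in $G_{{\sf dep}}$. That message can only carry information $S_\ell$ already possessed at the start of round $r$, because $S_\ell$ has not yet received its own round-$r$ responses when it emits that message. Therefore $x$ travelled exactly one edge of $G_{{\sf dep}}$ during round $r$.

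The main obstacle I anticipate is the pedantic one of fixing what ``information'' formally means here, so that the second clause is not vacuously trivialized (for instance, a shard could forward relayed data within the same round if the rules were loose). The lockstep coupling described in the \textbf{Synchrony} paragraph, which forces $S_\ell$ to wait for all of its own round-$r$ responses before serving round-$(r{+}1)$ queries, is precisely what forbids in-round chaining; making this explicit in the argument is the delicate step, and I would state it as a small invariant on the round counters before concluding.
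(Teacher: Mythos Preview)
Your argument is correct, but it takes a more operational route than the paper. The paper's proof is a three-line contradiction based purely on the privacy constraint: if a shard $S_i$ could exchange a message with a non-neighbor $S_j$, then $S_i$ would necessarily have learned of $S_j$'s existence despite $S_j \notin {\sf dep}_i$ (and vice versa), which the protocol explicitly forbids. That is the entire proof; the paper does not separate the two clauses and treats the ``one edge per round'' part as implicit in the notion of a round. By contrast, you argue the first clause by direct inspection of the $(+,-)$ rules and the edge construction of $G_{\sf dep}$, and you argue the second clause separately via a causality invariant grounded in the lockstep counters from the \textbf{Synchrony} paragraph. Your decomposition is more careful and actually closes the gap you yourself flag---that without the lockstep rule a shard might chain-forward within a single round---whereas the paper's proof simply does not address this point. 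The trade-off is length versus rigor: the paper's version is pithy and leans on the high-level privacy specification, while yours is longer but makes the per-round bound genuinely follow from the mechanism rather than from an appeal to what is ``forbidden.''
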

\begin{proof}
	Assume that a given shard $S_i$ can communicate with a shard $S_j$, where $S_i$ and $S_j$ are not directly connected by the dependency graph. By necessity, $S_i$ would need to learn about $S_j$, even though $S_j$ is not in $S_i$'s dependency set and vice versa. This is explicitly forbidden.
\end{proof}

\begin{lemma}
	\label{lem:paths}
	For any two given shards $S_i$ and $S_j$ there exists a finite or countably infinite set of paths between them. 	
\end{lemma}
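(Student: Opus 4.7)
The plan is to exploit finiteness of the vertex set of the dependency graph and then bucket the paths by length, using a standard countable-union argument. Let $n$ denote the number of shards involved in the transaction, so that the dependency graph $G_{\sf dep}$ has a finite vertex set $V$ with $|V|\le n$ and a finite edge set $E\subseteq V\times V$.

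First I would fix the intended notion of ``path.'' If one restricts to simple paths (no repeated vertices), then each such path from $S_i$ to $S_j$ has length at most $|V|-1$, so the set of simple paths is bounded in cardinality by $\sum_{k=0}^{|V|-1}\frac{|V|!}{(|V|-1-k)!}$, which is finite. This already yields the ``finite'' alternative of the lemma and handles, in particular, the case where no directed cycle lies on a walk from $S_i$ to $S_j$ (for instance, when $G_{\sf dep}$ is a DAG, as in the transitively directed example of Figure~\ref{fig:graphs1}).

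Next I would treat the case that allows repeated vertices (walks), which is the only way the set can fail to be finite, since this happens precisely when some directed cycle is reachable from $S_i$ and can reach $S_j$. Let $P_k$ denote the set of walks of length exactly $k$ from $S_i$ to $S_j$; by Lemma~\ref{lem:edge}, each such walk is a sequence of $k$ consecutive edges of $G_{\sf dep}$, so $|P_k|\le|E|^k<\infty$. Writing $P=\bigcup_{k\ge 0}P_k$, we exhibit $P$ as a countable union of finite sets, hence $P$ is at most countably infinite. Enumerating $P_0,P_1,\ldots$ in a diagonal fashion gives an explicit injection $P\hookrightarrow\mathbb{N}$, which establishes the ``countably infinite'' alternative whenever $P$ is infinite.

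The main obstacle I anticipate is not technical but definitional: the proof's flavor depends on whether the subsequent use of Lemma~\ref{lem:paths} counts simple paths or walks, and the two cases of the disjunction correspond to this choice. I would therefore make the dichotomy explicit in the proof, noting that \emph{either} $G_{\sf dep}$ contains no cycle on any $S_i$-to-$S_j$ walk, in which case the set of paths is finite (bounded by the simple-path count above), \emph{or} such a cycle exists and infinitely many walks can be formed by traversing it an arbitrary number of times, in which case the countable-union argument yields exactly countable infiniteness. This covers both alternatives of the lemma and relies only on the finiteness of $V$ and $E$ together with Lemma~\ref{lem:edge}.
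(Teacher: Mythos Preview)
Your argument is correct and essentially matches the paper's: both use that the dependency graph has finitely many edges, so every path is a finite sequence over a finite alphabet and the collection of such sequences is countable---the paper encodes each path as a digit-string over an edge-labeling alphabet and orders them, while you bucket walks by length and take a countable union of finite sets. Your explicit treatment of the finite-versus-countably-infinite dichotomy is a clarifying addition the paper leaves implicit; note, however, that your invocation of Lemma~\ref{lem:edge} is unnecessary here, since that lemma concerns per-round information flow rather than the definition of a walk.
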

\begin{proof}
	The number of edges of the dependency graph is finite, otherwise the underlying transaction would be infinitely large. Therefore we can enumerate them all. For the purposes of this proof we choose a counting system with a base at least as large as the number of edges. This way, each edge is labeled by a unique symbol.
	
	Starting from $S_i$ we create all possible paths by traversing all connecting edges and writing down the symbols of the edges. Then, for each vertex we landed on, we repeat this process, appending the new edge number to the previous sequence. The process for a given path ends when $S_j$ is reached. Since the dependency graph can have cycles, sequences can be arbitrarily long. But, all paths from $S_i$ to $S_j$ must be finite by definition. Therefore, we can sort the sequences belonging to those paths in ascending order and count them.
\end{proof}
\begin{lemma}
	\label{lem:sp}
	For any two given connected shards $S_i$ and $S_j$ it takes exactly $l$ rounds of communication to relay information from one to the other, where $l$ is the length of the shortest path between the shards. If there is no path, then $l = \infty$.
\end{lemma}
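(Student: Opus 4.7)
The plan is to prove matching upper and lower bounds of $l$ rounds on the number of communication rounds needed to relay information between $S_i$ and $S_j$. I would start by establishing the lower bound via contradiction. Suppose information could flow from $S_i$ to $S_j$ in $k < l$ rounds. By Lemma~\ref{lem:edge}, the trajectory of this information forms a walk in the dependency graph of length at most $k$ connecting $S_i$ to $S_j$. Any such walk contains a simple path of length no greater than $k < l$, contradicting the definition of $l$ as the length of the shortest path between the two shards.

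Next, for the upper bound, I would proceed by induction along a fixed shortest path $S_i = v_0, v_1, \ldots, v_l = S_j$. The inductive claim is that after $r$ rounds the state initially held at $v_0$ is known at $v_r$. The base case $r = 0$ is trivial. For the inductive step, since $(v_r, v_{r+1})$ is an edge of the dependency graph, the protocol description in Section~\ref{sec:PPC} guarantees that in round $r+1$ information crosses this edge, either through $v_{r+1}$ querying $v_r$ or through $v_r$ responding to a query from $v_{r+1}$ (depending on the $(+,-)$ annotation). Hence, after exactly $l$ rounds the information from $S_i$ has reached $v_l = S_j$, and combined with the matching lower bound this settles the connected case.

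For the disconnected case, Lemma~\ref{lem:edge} again applies: since there is no sequence of edges in the dependency graph linking $S_i$ to $S_j$, no finite number of rounds suffices, which justifies the convention $l = \infty$.

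The main obstacle is carefully reconciling the directedness of edges induced by the $(+,-)$ signs with the direction in which information actually flows along them within a single round. I must verify that along the shortest directed path chosen for the upper bound, each edge can transmit the relevant fragment of state in one round in the direction required, and that the lockstep synchrony mechanism described in Section~\ref{sec:PPC} does not stall this propagation even though shards start the protocol at different physical times. Once that is made precise, the rest of the argument reduces to standard reasoning about shortest paths in directed graphs.
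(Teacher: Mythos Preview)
Your argument is correct, but it is organized differently from the paper's. The paper invokes Lemma~\ref{lem:paths} to enumerate and sort all paths from $S_i$ to $S_j$ by length, then pictures information flooding along every path in parallel, decrementing each path's remaining length by one per round; the shortest path hits zero after $l$ rounds, and the ``no shorter path exists'' clause is handled simply by appealing to the completeness of the enumeration. You instead bypass Lemma~\ref{lem:paths} entirely: your upper bound fixes a single shortest path and inducts along it, and your lower bound is a clean walk-to-path extraction argument. Your decomposition is the more standard graph-theoretic one and is arguably tighter, since it does not need the countability detour and makes the ``exactly $l$'' claim explicit as two matching inequalities. The paper's version is more informal and, as you note in your obstacle paragraph, glosses over the interaction between edge direction, the $(+,-)$ annotations, and the lockstep mechanism; your proposal at least flags this as something to verify, which is appropriate.
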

\begin{proof}
	As by Lemma~\ref{lem:paths}, the paths between two shards are countable, so we can order all paths between $S_i$ and $S_j$ by their length. Now, we let all shards relay all their information to all neighbors in every round of communication. According to Lemma~\ref{lem:edge}, the information can make progress along each path one edge at a time. We can track this by subtracting 1 from the length of each path. After $l$ rounds the shortest path will have 0 length, indicating that $S_j$ received the information.
	
	There cannot be a shorter path that is not on the list, because by definition we listed all possible paths between $S_i$ and $S_j$.
	
	If there is no path in the ordered list, then this search will never stop, i.e. $l=\infty$. 
\end{proof}

\begin{theorem}
	Let \texttt{PPAC} be an algorithm that ensures atomic commitment of a transaction $tx$ while keeping the global set of participants secret. Then, \texttt{PPAC} must go through at least $l^*$ rounds of communication, where $l^*$ is the length of the longest $\emph{shortest path}$ between any two connected shards of the dependency graph for $tx$.
\end{theorem}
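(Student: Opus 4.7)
The plan is to derive this lower bound directly from the preceding lemmas, treating it as a consequence of Lemma~\ref{lem:sp} combined with the definition of atomic commit. Recall that atomic commit requires all involved shards to reach the same decision (commit or discard). Since any single shard is a priori allowed to start with either a \texttt{tentative commit} or a \texttt{discard} value, every shard in a connected component of the dependency graph must ultimately base its final decision on the initial values of \emph{all} shards it is connected to --- otherwise an adversarial choice of initial values for the unconsulted shards could break atomicity.

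The main step is a proof by contradiction. I would assume \texttt{PPAC} terminates in strictly fewer than $l^*$ rounds on some transaction $tx$, and then pick the pair of connected shards $S_i, S_j$ whose shortest path has length $l^*$ (which exists by the definition of $l^*$). By Lemma~\ref{lem:edge}, information travels at most one edge per round, so by Lemma~\ref{lem:sp}, after strictly fewer than $l^*$ rounds, no data originating at $S_j$ can have reached $S_i$. Now I construct two executions that look identical from $S_i$'s local perspective up to its point of termination but differ in the initial value held by $S_j$ (one where $S_j$ starts with \texttt{tentative commit}, one with \texttt{discard}). Since $S_i$ must output the same value in both executions (it cannot distinguish them), while $S_j$ ends up outputting different values in the two, atomicity is violated in at least one of them.

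The delicate part will be the indistinguishability argument: I need to ensure that the alternative scenario I construct is itself a valid execution that \texttt{PPAC} must handle correctly. This requires observing that the privacy constraint --- that a shard only communicates with shards in its dependency set --- is precisely what forbids any ``shortcut'' channel between $S_i$ and $S_j$ outside the dependency graph (which is exactly Lemma~\ref{lem:edge}). So any message that could in principle distinguish the two scenarios at $S_i$'s side must have propagated along an edge-path from $S_j$ to $S_i$, and the shortest such path has length $l^*$ by choice of the pair. Consequently, with fewer than $l^*$ rounds, no such distinguishing message exists, making the two executions indistinguishable at $S_i$ and yielding the contradiction.

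Finally, I would remark that the bound is tight in the sense complementary to this theorem: the protocol in Section~\ref{sec:PPC} achieves termination in at most $n-1$ rounds, and more refined upper bounds tied to the diameter structure of the dependency graph match $l^*$ when propagation along all edges occurs in every round, so $l^*$ rounds are both necessary and, under the synchrony coupling described in the main text, sufficient.
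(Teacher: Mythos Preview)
Your proposal is correct and rests on the same ingredients as the paper's proof: atomic commit forces each shard to learn the initial value of every shard it is connected to, and by Lemma~\ref{lem:sp} this information needs $l^*$ rounds to traverse the worst-case shortest path. The paper argues this directly in a few lines (``a shard must learn all validation results of the shards it is transitively dependent on because a single \texttt{discard} signal would change its own belief''), whereas you unfold the same point into an explicit indistinguishability argument with two adversarially chosen executions. That is the standard distributed-computing way of making such a lower bound rigorous, and it buys you a cleaner justification of \emph{why} the unlearned value matters; the paper's version is terser but relies on the reader supplying exactly the contradiction you spell out. Your closing remark on tightness is an addition not present in the paper's proof of this theorem, but it is consistent with the surrounding discussion.
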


\begin{proof}
	For a transaction to commit correctly, every shard must commit or discard the transaction atomically together with all other dependent shards. By necessity, a shard must learn all validation results of the shards it is transitively dependent on because a single \texttt{discard} signal would change its own belief. According to Lemma~\ref{lem:sp}, for every shard $S_i$, every shortest path to another connected shard $S_j$ must be traversed. If two shards are not connected, they do not require to learn each other's belief, so for all relevant paths is $l<\infty$. The transaction is not finalized before the last one of them reaches its destination. Therefore, the longest length of all shortest paths forms the lower bound on finalization.
\end{proof}

Note that this is only a lower bound on global transaction finalization. Single shards are able to finalize their beliefs in the following four cases:
\begin{itemize}
	\item It discarded the transaction in its own validation.
	\item It is not dependent on any other shard.
	\item It received a \texttt{discard} signal from one of its collaborating shards.
	\item All of its collaborating shards signal that they finalized their belief.
\end{itemize}

A finalized shard must broadcast its belief to its neighbors before dropping out of the commit protocol.

If shards do not encounter any of these cases, they might not stop after $l^*$ rounds, because they are not able to construct the full dependency graph. Therefore, $l^*$ is not known to the shards.

\begin{lemma}
	\label{lem:degree}
	For a given shard $S_i$, it's longest \emph{shortest path} to another connected shard $S_j$ has at most length $n-\delta^+(S_i)$, where $n$ is the number of nonces in the transaction and $\delta^+(S_i)$ is the out-degree of vertex $S_i$.
\end{lemma}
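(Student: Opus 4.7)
The plan is to prove the bound by a direct counting argument on a breadth-first search (BFS) tree rooted at $S_i$. The intuition is simple: the BFS from $S_i$ must fit inside the $n$ vertices of the dependency graph, and level $1$ already contains $\delta^+(S_i)$ of them, leaving at most $n - 1 - \delta^+(S_i)$ vertices to spread across the deeper levels. Since every non-empty BFS level beyond level $1$ contributes at least one new vertex, the maximum depth is constrained accordingly.

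More concretely, I would first recall that the dependency graph has exactly $n$ vertices (one per nonce, by the construction in Section~\ref{sec:PPC}) and is a simple directed graph, so the $\delta^+(S_i)$ out-neighbors of $S_i$ are $\delta^+(S_i)$ distinct vertices, none equal to $S_i$ itself. Let $L$ denote the length of the longest shortest path from $S_i$ to any shard reachable from $S_i$, and let $\mathrm{Level}_k$ denote the set of vertices at distance exactly $k$ from $S_i$ in the BFS tree. Then $|\mathrm{Level}_0| = 1$, $|\mathrm{Level}_1| = \delta^+(S_i)$, and $|\mathrm{Level}_k| \ge 1$ for each $k = 2, \ldots, L$, since otherwise BFS would have terminated before reaching depth $L$.

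The remainder of the proof is a straightforward inequality: the BFS levels are disjoint subsets of the vertex set, so
\[
1 + \delta^+(S_i) + \sum_{k=2}^{L} |\mathrm{Level}_k| \le n.
\]
Using $|\mathrm{Level}_k| \ge 1$ for $k \ge 2$ gives $1 + \delta^+(S_i) + (L-1) \le n$, which rearranges to $L \le n - \delta^+(S_i)$, as claimed.

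The main obstacle, if any, is purely stylistic: one has to be careful that $\delta^+(S_i)$ truly counts distinct vertices in the BFS tree at depth $1$, which follows from the fact that the dependency graph (as defined in Section~\ref{sec:PPC}) has at most one edge per ordered pair of shard identifiers and no self-loops. Beyond that caveat, the argument is a standard BFS bound and no further subtleties arise.
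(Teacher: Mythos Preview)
Your BFS-level counting argument is correct and is essentially the same approach as the paper's, which phrases the identical counting idea constructively as a worst-case chain of the remaining $n-\delta^+(S_i)-1$ vertices attached to one of $S_i$'s out-neighbors. One small caveat: the dependency graph has \emph{at most} $n$ vertices rather than exactly $n$ (the paper explicitly notes that several nonces may map to the same shard or be invalid), but this only strengthens your inequality and leaves the bound intact.
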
	
\begin{proof}
	For $S_i$, $\delta^+(S_i)$ shards are connected by a shortest path of length 1. This leaves at most $n-\delta^+(S_i) -1$ unknown shards\footnote{Some of the nonces could be associated with the same shard or be invalid.}. In the worst case they are positioned as a chain with  $n-\delta^+(S_i) -2$ edges connecting them. This chain of shards must be connected to $S_i$ via one of its direct neighbors. Let $S_j$ be the last vertex in the chain. Then, the shortest path from $S_i$ to $S_j$ is 
	\begin{align*}
		l^*&=\underbrace{1}_{\text{\makebox[30pt][c]{\tiny neighbor}}} + \underbrace{1}_{\text{\makebox[30pt][c]{\tiny connect chain}}}+ (n-\delta^+(S_i)-2) \\
		&=n-\delta^+(S_i).	
	\end{align*}
\end{proof}
\begin{theorem}
	\label{the:2}
	All shards finalize their belief after at most $n-\min\limits_i \delta^+(S_i)$ rounds of communication.
\end{theorem}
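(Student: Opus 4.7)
The plan is to combine Lemma~\ref{lem:degree} with the propagation result of Lemma~\ref{lem:sp} to obtain the stated global upper bound. The key observation is that Lemma~\ref{lem:degree} already gives a per-shard bound on the longest shortest path originating at any given vertex, so the global longest shortest path $l^\ast$ in the dependency graph is at most the maximum over shards of these per-shard bounds; taking the maximum turns the $\delta^+(S_i)$ term into $\min_i \delta^+(S_i)$.

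First, I would recall from the discussion following the previous theorem that a shard can safely finalize its belief once it has received information from every shard to which it is transitively connected. By Lemma~\ref{lem:sp}, this information reaches $S_i$ from any connected $S_j$ after exactly $d(S_j, S_i)$ rounds, where $d$ denotes shortest-path length in the dependency graph. Hence every shard has the data it needs after $l^\ast = \max_{i,j \text{ connected}} d(S_i, S_j)$ rounds, at which point the local finalization rule (validate plus aggregate the received \texttt{commit}/\texttt{discard} signals) resolves its belief deterministically.

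Next, I would bound $l^\ast$ using Lemma~\ref{lem:degree}. For each shard $S_i$, its longest shortest path to a connected shard is at most $n - \delta^+(S_i)$. Taking the maximum over $i$ yields
\[
l^\ast \;\le\; \max_i\bigl(n - \delta^+(S_i)\bigr) \;=\; n - \min_i \delta^+(S_i).
\]
Combining this with the observation from the previous paragraph gives the claimed upper bound on the number of rounds until all shards finalize.

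The argument is essentially bookkeeping once Lemmas~\ref{lem:sp} and~\ref{lem:degree} are in hand, so I do not expect a serious technical obstacle. The one subtlety worth flagging is the treatment of shards that are not connected in the dependency graph: Lemma~\ref{lem:sp} gives $l = \infty$ in that case, but as the discussion preceding Lemma~\ref{lem:degree} notes, such shards do not need to learn each other's belief for correctness, so the relevant maximum is taken over pairs of \emph{connected} shards only. I would make this restriction explicit at the start of the proof to avoid the spurious $\infty$ and to justify why the per-shard bound of Lemma~\ref{lem:degree} is indeed the quantity that controls global finalization.
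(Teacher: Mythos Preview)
Your proposal is correct and matches the paper's approach: the paper's proof is the single line ``Directly follows from Lemma~\ref{lem:degree},'' which is exactly your argument of taking the maximum over $i$ of the per-shard bound $n-\delta^+(S_i)$ to obtain $n-\min_i\delta^+(S_i)$. Your write-up simply makes explicit the role of Lemma~\ref{lem:sp} and the handling of disconnected shards that the paper leaves implicit.
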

\begin{proof}
	Directly follows from Lemma~\ref{lem:degree}.
\end{proof}

	\section{Pseudocode}
	\label{sec:pseudocode}
\begin{minipage}[H]{\linewidth}
	\centering
	\begin{algorithm}[H]
	\caption{\texttt{PPAC}\label{alg:PPC}}
	\begin{algorithmic}	
		\Function{ValidateTx}{\emph{requestID, initialBelief, hashCount, deps}} 
		
		\Comment \emph{hashCount}: number of hashes in the transaction.
		\Comment \emph{deps}: shard dependencies with a $+$ or no sign.
		
		\State $b \gets initialBelief$
		\Comment{\emph{false = discard}, \emph{true = commit}}
		
		\State
		\If {$\text{\Call{Count}{\emph{deps}}} = 0\ \vee\ b = \emph{false}$}
		\State \Call{StoreBelief}{\emph{requestID}, $b$, \emph{isFinal}: true} 
		
		\Comment{save to internal storage beyond the sketch of this algorithm}
		\State \Return $b$
		\EndIf
		\State
		\For{$\text{\emph{round}}\gets 1,\ \text{\emph{responsesCount}} - \text{\Call{Count}{\emph{deps}}}$}
		\State \Call{StoreRound}{\emph{requestID}, \emph{round}} \Comment{save to internal storage}
		\State \Call{StoreBelief}{\emph{requestID}, $b$, \emph{isFinal}: false}
		\State\emph{allFinal} $\gets \text{true}$
		\State
		\ForAll{\emph{dep} \textbf{in} \emph{deps}}
		\State\Comment{remote call to other shard, caller made explicit for easier understanding}
		\State $b_{dep}$, \emph{isFinal} $\gets \text{\emph{dep}.\Call{PullBelief}{\emph{requestID}, thisShard}}$ 
		
		\State \emph{allFinal} $\gets$ \emph{allFinal} $\wedge$ \emph{isFinal}
		\State $b \gets b\ \wedge\ 	b_{dep}$
		\State	
		\If {\textbf{not} $b$}
		\State break all loops
		\EndIf
		\EndFor
		\State
		\If {\emph{allFinal}}		
		\State break loop
		\EndIf
		\EndFor
		\State 
		\State \Call{StoreBelief}{requestID, $b$, \emph{isFinal}: true}
		\State \Return $b$
		\EndFunction
		\State
		\State
		
	\end{algorithmic}
\end{algorithm}
\end{minipage}

\begin{algorithm}[H]
	\caption{\texttt{PullBelief} is called remotely by \texttt{PPAC} on a different shard.\label{alg:pull}}
	\begin{algorithmic}	
		
		\Function{PullBelief}{\emph{requestID}, \emph{shard$_{\text{caller}}$}} 
		
		\State\Comment function is called remotely from other shard, in practice \emph{shard$_{\text{caller}}$} would be obtained implicitly from caller address
		\State $i \gets \text{\Call{GetCallCount}{\emph{requestID}, $\emph{shard$_{\text{caller}}$}$}}$ 
		\Comment{read from internal storage}
		\State
		\While{{\text{\Call{GetRound}{\emph{requestID}}} $\leq i$ }}
		\State Wait
		\EndWhile 
		\State $b, \text{\emph{isFinal}} \gets$ \Call{GetBelief}{\emph{requestID}} 
		\Comment{read from internal storage}
		\State \Call{StoreCallCount}{\emph{requestID}, \emph{shard$_{\text{caller}}$},  $i + 1$}
		\Comment{save back to internal storage}
		\State \Call{RespondTo}{\emph{shard$_{\text{caller}}$}, $b$, \emph{isFinal}} 
		\Comment{send back current belief}
		
	\EndFunction

	\end{algorithmic}
\end{algorithm}
\end{document}